\newcommand{\BCLIQUE}{\textsc{BClique}}
\newcommand{\UCLIQUE}{\textsc{UClique}}
\title{Deterministic graph connectivity in the broadcast congested clique}
\author{Pedro Montealegre\thanks{Research supported by Conicyt-Becas Chile-72130083.}  \and Ioan Todinca}
\institute{
Univ. Orl\'{e}ans, INSA Centre Val de Loire, LIFO EA 4022,  Orl{\'e}ans , France
\\ \texttt{(pedro.montealegre $\mid$ ioan.todinca)@univ-orleans.fr}
}
\begin{document}
\maketitle
\begin{abstract}
We present deterministic constant-round protocols for the graph connectivity problem in the model where each of the $n$ nodes of a graph receives a row of the adjacency matrix, and  broadcasts a single sublinear size message to all other nodes. Communication rounds are synchronous. This model is sometimes called the broadcast congested clique. Specifically, we exhibit a deterministic protocol that computes the connected components of the input graph in $\lceil 1/\epsilon \rceil$ rounds, each  player communicating $\mathcal{O}(n^{\epsilon} \cdot \log n)$ bits per round, with $0 < \epsilon \leq 1$. 

We also provide a deterministic one-round protocol for connectivity, in the model when each node receives as input the graph induced by the nodes at distance at most $r>0$, and communicates $\mathcal{O}(n^{1/r} \cdot \log n)$ bits.  This result is based on a $d$-pruning protocol, which consists in successively removing nodes of degree at most $d$ until obtaining a graph with minimum degree larger than $d$. Our technical novelty is the introduction of deterministic sparse linear sketches: a linear compression function that permits to recover sparse Boolean vectors deterministically. 

\end{abstract}

\keywords{broadcast congested clique; graph connectivity; spanning forest;  deterministic protocol}

\section{Introduction}
This paper proposes the first (to our knowledge) constant-round deterministic protocol that computes a  spanning forest\footnote{In this paper a spanning forest designs a maximal one.} in the \emph{broadcast congested clique} model with {\it sub-linear} message size. 

The \emph{congested clique}  model is a message-passing model of distributed computation where $n$ nodes communicate with each other in synchronous rounds over a {\emph{complete network}}~\cite{AGM12,AGM12b,BKM+15,drucker12}.
The joint input of the $n$ nodes is an undirected  graph $G$ on the same set of nodes, with node $v$ receiving the list of its neighbors in $G$. Nodes have pairwise distinct identities, which are numbers upper bounded by some polynomial in $n$. Moreover, all nodes know this upper bound. At the end of a protocol, each node must produce the same output. In the  {\emph{broadcast congested clique}}  (denoted $\BCLIQUE[f(n)]$) each node broadcasts, in each round, a single  $\mathcal{O}(f(n))$-sized message\footnote{Some authors consider that in the \emph{congested clique} model only $\mathcal{O}(\log n)$-size messages are allowed. In our case the {\it congestion} refers to sublinear capacity communication channels.} along each of its $n-1$ communication links, while in the {\emph{unicast congested clique}} (denoted $\UCLIQUE[f(n)]$) nodes are allowed to send  {\emph{different}}  $\mathcal{O}(f(n))$-size messages through different links.

There exists a basic $\mathcal{O}(\log n)$-round connectivity protocol in the $\BCLIQUE[\log n]$ model. Each node successively sends a neighbor outside its current cluster (initially the clusters consist in single nodes), and at each round the adjacent clusters are merged. The protocol can be adapted to compute a maximum weight spanning forest. Despite its simplicity, only few results exist improving the basic connectivity protocol. 

On one hand, Lotker {\it et al.} \cite{doi:10.1137/S0097539704441848} proposed the first improvement in the unicast congested clique, namely a $\mathcal{O}(\log\log n)$-round protocol for maximum weight spanning forest in the $\UCLIQUE[\log n]$ model, which also runs in $\mathcal{O}(1/\epsilon)$ rounds in  $\UCLIQUE[n^{\epsilon}]$, for any constant $\epsilon >0$. Later, Hegeman {\it et al.} \cite{Hegeman:2015:TOB:2767386.2767434} improved the number of rounds  to $\mathcal{O}(\log \log \log n)$, also in $\UCLIQUE[\log n]$, but in a randomized protocol, which outputs a maximum weight spanning forest with high probability.  On the other hand, in the broadcast congested clique, a one-round randomized protocol for graph connectivity in $\BCLIQUE[\log^3 n]$ was proposed by Ahn {\it et al.} \cite{AGM12}, using an elegant techinque called {\it linear sketches}. 

\medskip
\noindent\textbf{Our results.}
First, we observe that the basic connectivity protocol can be used to produce, for each $0 < \epsilon \leq 1$  a $\mathcal{O}(1/\epsilon)$-round protocol that computes a spanning forest in the $\BCLIQUE[n^{\epsilon} \cdot \log n]$ model. 

We also propose a one-round connectivity protocol, this time the $\BCLIQUE_r[n^{1/r}\log n]$ model. This corresponds to the broadcast congested clique model, when each node receives as input the graph induced by the nodes at distance at most $r$ (one may think of this model as the one where vertices are allowed to perform $r$ unrestricted {\it local} communications, and then some congested {\it global} communications).

Our main tool is a protocol to {\it prune} a graph. The $d$-{\it pruning} of graph $G$, for $d>0$, consists of finding a sequence of nodes $(x_1,\dots, x_p)$, together with their incident edges, such that each vertex $x_i$ has degree at most $d$ in the graph obtained by removing vertices $\{x_1,\dots, x_{i-1}\}$ from $G$, and the graph obtained by removing the whole sequence has minimum degree strictly larger than $d$.

Becker \textit{et al.}~\cite{BKM+15} provide a one-round deterministic protocol for $d$-pruning a graph in $\BCLIQUE[d^2 \cdot \log n]$. They actually work on the reconstruction of d-degenerate graphs, i.e., graphs for which the pruning sequence is the whole vertex set. 

We give a new protocol for $d$-pruning, which also runs in one round, but improves the message size to $\mathcal{O}(d \cdot \log n)$. Our protocol is optimal since the number of $d$-degenerate graphs is $2^{\Omega(nd\log n)}$.  This new protocol is inspired by the {\it fingerprint} technique, which uses the Schwartz-Zippel Lemma for equality testing in randomized communication complexity \cite{KuNi97}. The protocol is based on a derandomized version of fingerprints to compute {\it sparse linear sketches}: a linear compression function that permits to recover $d$-sparse Boolean vectors deterministically. This technique has its own interest and we believe that can be used in other applications.

\section{Pruning and connectivity}

\begin{theorem}
Let $0 < \epsilon \leq 1$. There is a $\lceil 1/\epsilon \rceil$-round deterministic protocol that computes the connected components of the input graph in the $\BCLIQUE[n^{\epsilon} \cdot \log n]$ model. The protocol returns a spanning forest of the input graph.
\end{theorem}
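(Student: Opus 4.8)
The plan is to generalize the basic $\mathcal{O}(\log n)$-round connectivity protocol by batching several of its steps together, exploiting the larger message size $\mathcal{O}(n^{\epsilon}\log n)$ permitted per round. Recall that in the basic protocol each node maintains a cluster label (initially its own identity), in each round announces one incident edge leaving its current cluster, and then all nodes locally merge adjacent clusters and relabel. This merges clusters in a "pointer-jumping" fashion, but a single edge per round only guarantees that the number of clusters drops by a constant factor, hence $\Theta(\log n)$ rounds. The key observation is that if each node instead broadcasts $n^{\epsilon}$ incident edges leaving its cluster (when it has that many; otherwise all of them), then every cluster of size at most $n^{\epsilon}$ that is not a whole connected component will learn enough edges to be absorbed into a strictly larger cluster, so in one round the size of the smallest non-maximal cluster roughly raises to the power $1+\epsilon$ — more precisely, after one round every cluster has size at least $n^{\epsilon}$ or is a full component, and after $k$ rounds every cluster has size at least $n^{k\epsilon}$ or is a full component. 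Thus after $\lceil 1/\epsilon\rceil$ rounds every cluster is a connected component.

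Concretely, I would carry out the following steps. \textbf{Step 1 (invariant).} State and prove the invariant: at the start of round $k$ (with $k \ge 1$), every cluster either equals a connected component of $G$ or has at least $n^{(k-1)\epsilon}$ nodes. The base case $k=1$ is trivial since clusters are singletons and $n^0 = 1$. \textbf{Step 2 (the round).} Describe round $k$: each node $v$ computes the set of edges from $v$ to nodes outside its current cluster, picks $\min\{n^{\epsilon}, \text{that number}\}$ of them (say, those to the smallest-identity neighbors), and broadcasts them; this costs $\mathcal{O}(n^{\epsilon}\cdot\log n)$ bits since each edge is a pair of identities. Every node now knows, for every node, this list of outgoing edges, so every node can perform the same deterministic cluster-merging locally and agree on the new labeling. \textbf{Step 3 (inductive step).} Show the invariant is preserved: consider a cluster $C$ at the start of round $k$ that is not a full component and has size $s \ge n^{(k-1)\epsilon}$. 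If some node of $C$ had more than $n^{\epsilon}$ outgoing edges and they were not all announced, then in particular at least one outgoing edge was announced (since $C$ is not a component, outgoing edges exist), so $C$ merges with at least one neighboring cluster, giving a new cluster of size $\ge s + n^{(k-1)\epsilon} \ge 2\,n^{(k-1)\epsilon}$. If every node of $C$ had at most $n^{\epsilon}$ outgoing edges, then all of them were announced; the union of $C$ with all neighboring clusters reached by these edges is again absorbed into one new cluster. In either case, if the new cluster is still not a full component, a counting argument (a non-maximal cluster of size $< n^{k\epsilon}$ must, combined with its announced neighbors, exceed size $n^{k\epsilon}$, or else one finds more room to grow) shows its size is at least $n^{k\epsilon}$. \textbf{Step 4 (termination).} After round $\lceil 1/\epsilon\rceil$, the invariant gives that every non-maximal cluster has size $\ge n^{\lceil 1/\epsilon\rceil\epsilon}\ge n$, which is impossible unless it is a full component; hence all clusters are connected components. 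The spanning forest is obtained by collecting, over all rounds, the specific edges actually used to merge clusters — these form a forest since a merge edge always joins two previously distinct clusters.

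The main obstacle is making the size-amplification step (Step 3) tight and clean: one has to argue carefully that after round $k$ a cluster that is still not a whole component really has size at least $n^{k\epsilon}$, not merely twice its previous size. The right formulation is to say that a cluster $C$ surviving round $k$ as a non-component satisfies: either $C$ already had size $\ge n^{k\epsilon}$ before, or every node of the pre-merge cluster(s) forming $C$ had at most $n^{\epsilon}$ outgoing edges at the start of round $k$ — and in the latter case all outgoing edges of those clusters were broadcast, so $C$ cannot have outgoing edges to any cluster whose existence would have been revealed, forcing $C$ to have absorbed every cluster adjacent to its constituents; iterating the size bound across the (at least $n^{\epsilon}$, if non-maximal) merged pieces then yields size $\ge n^{(k-1)\epsilon}\cdot n^{\epsilon} = n^{k\epsilon}$. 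A secondary point requiring care is that all nodes must perform identical local computations to keep the cluster labeling consistent, which holds because every broadcast message is received by everyone and the merging rule is deterministic.
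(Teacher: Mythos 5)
Your overall plan (batching the basic protocol, a per-round growth invariant, $\lceil 1/\epsilon\rceil$ rounds) is the same as the paper's, but your edge-selection rule breaks the key step, and this is a genuine gap rather than a tightening issue. You have each node broadcast $n^{\epsilon}$ arbitrary outgoing edges (e.g.\ those to its smallest-identity neighbors outside its cluster). These $n^{\epsilon}$ edges may all lead into the \emph{same} neighboring cluster, so a cluster $C$ all of whose nodes have more than $n^{\epsilon}$ outgoing edges may merge with only one neighbor and merely double in size; the dichotomy you propose in your patch of Step 3 (``either $C$ already had size $\geq n^{k\epsilon}$, or every node of $C$ had at most $n^{\epsilon}$ outgoing edges'') omits exactly this case, and $2n^{(k-1)\epsilon}$ is in general far below $n^{k\epsilon}$. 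Moreover, even in the case where all outgoing edges of $C$ are broadcast, absorbing every cluster adjacent to $C$ does not make the result a component (the absorbed clusters may have unrevealed outgoing edges of their own), and nothing forces the ``at least $n^{\epsilon}$ merged pieces'' that your patch asserts: $C$ could have a single neighboring cluster. So the invariant is not preserved and the round bound does not follow.

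The fix, which is what the paper does, is to change the selection rule: each node announces one neighbor in each of $n^{\epsilon}$ \emph{pairwise different} clusters (or one in each cluster it sees, if it sees fewer than $n^{\epsilon}$ of them). This yields a clean dichotomy for each merged cluster $S$: either some constituent node saw at least $n^{\epsilon}$ distinct clusters, in which case $S$ contains at least $n^{\epsilon}$ clusters of the previous round (the paper calls such an $S$ \emph{active} and counts clusters rather than vertices, getting $n_{t+1}\leq n_t/n^{\epsilon}$ and hence termination in $\lceil 1/\epsilon\rceil$ rounds); or every constituent node saw fewer than $n^{\epsilon}$ clusters and therefore revealed \emph{all} of its neighboring clusters, so $S$ has absorbed everything adjacent to it, has no outgoing edges, and is a full connected component. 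With this rule your vertex-counting invariant also goes through, since each non-component cluster merged into $S$ has size at least $n^{(k-1)\epsilon}$ by induction. Your points about all nodes performing identical deterministic local merges, and about assembling the spanning forest from the edges actually used to merge clusters, are fine.
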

\noindent {\it Proof sketch.}
Let $G=(V,E)$ the input graph. The protocol is very similar to the basic connectivity protocol described in the introduction. First, denote $\hat{V}$ the set of {\it supernodes}, which initially are the $n$ singletons $\{\{u\} | u \in V\}$. At each round,  every node sends $n^{\epsilon}$ arbitrary neighbors in pairwise different supernodes (if it sees less than $n^{\epsilon}$ supernodes, then it sends one neighbor for each of them). At the end of each round adjacent supernodes are merged.

A supernode of round $t>1$ is called {\it active} if it contains at least $n^{\epsilon}$ supernodes of round $t-1$, and otherwise is called {\it inactive}. Note that at any round $t>0$, an inactive supernode corresponds to a connected component of $G$. The protocol finishes when every supernode is inactive. 
Let $n_t$ be the number of active supernodes at round $t \geq 0$. Since the number of active nodes at round $t+1$ at most $n_t / n^{\epsilon}$, the protocol stops in at most $\lceil 1/\epsilon \rceil$ rounds. 
\qed



\medskip

We remark that when $\epsilon \leq 1/(\log n)$ our protocol matches de basic one. When $\epsilon = 1/2$, we obtain a sublinear, two-round protocol for connectivity. To our knowledge, there are no nontrivial lower bounds for deterministic protocols solving graph connectivity in the broadcast congested clique, even restricted to one-round protocols. However, even in the powerful unicast congested clique, there are no known sublinear one-round deterministic protocols. In the following, we propose a one-round deterministic protocol for graph connectivity in the $\BCLIQUE_r$ model.

\begin{theorem}\label{theo:prun}
There exists a deterministic protocol in the $\BCLIQUE[d \cdot \log n]$ model that computes a $d$-{pruning} in one round.
\end{theorem}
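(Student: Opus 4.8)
The plan is to build a deterministic \emph{$d$-sparse linear sketch} and then use it in a single broadcast round in which every node sends the sketch of its adjacency row (plus its degree); afterwards all nodes \emph{locally} simulate the peeling process on identical data, so they all output the same pruning. Fix a prime $q$ with $N < q = \poly(n)$, where $N$ is the common polynomial bound on the identities (so $\log q = \mathcal{O}(\log n)$), and let $M \in \Z_q^{\,2d \times N}$ be the Vandermonde matrix $M_{i,a} = a^{\,i-1}$, $1 \le i \le 2d$, with $a$ ranging over the possible identities. Any $2d$ columns of $M$ form an invertible Vandermonde matrix, hence are linearly independent over $\Z_q$; consequently two distinct $d$-sparse vectors in $\{0,1\}^{N}$ (their difference being nonzero and $2d$-sparse) have distinct images under $M$, and the $d$-sparse preimage can moreover be recovered efficiently by the standard error-locator-polynomial (Peterson--Gorenstein--Zierler / Berlekamp--Massey) decoding of Reed--Solomon syndromes. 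For a vector with at most $d$ ones, $M\mathbf x$ is a list of $2d$ elements of $\Z_q$, i.e.\ $\mathcal{O}(d \log n)$ bits.

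The protocol has one round: each node $v$ broadcasts $\deg_G(v)$ and the sketch $M\mathbf a_v \in \Z_q^{2d}$, where $\mathbf a_v \in \{0,1\}^N$ is the indicator of $N_G(v)$; this is $\mathcal{O}(d \log n)$ bits. Every node then runs the same deterministic computation. It maintains a partial pruning sequence $x_1,\dots,x_{i-1}$, for each $x_j$ the recovered set $N_j := N_G(x_j) \setminus \{x_1,\dots,x_{j-1}\}$, and the degrees $\deg_i(w)$ of the surviving vertices in $G - \{x_1,\dots,x_{i-1}\}$ (initialized from the broadcast degrees). If some surviving vertex has current degree at most $d$, let $x_i$ be the smallest-identity such vertex; otherwise halt. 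A previously pruned $x_j$ ($j<i$) is adjacent to $x_i$ in $G$ if and only if $x_i \in N_j$, so the vector $\mathbf y_i := \mathbf a_{x_i} - \sum_{j<i:\, x_i \in N_j} \mathbf e_{x_j}$ is exactly the indicator of $N_G(x_i) \setminus \{x_1,\dots,x_{i-1}\}$ (a genuine $0/1$ vector, since the subtracted coordinates lie inside $N_G(x_i)$), hence has at most $\deg_i(x_i) \le d$ ones; its sketch $M\mathbf y_i = M\mathbf a_{x_i} - \sum_{j<i:\, x_i\in N_j} M\mathbf e_{x_j}$ is computable from the transcript (the $M\mathbf e_{x_j}$ are columns of the public matrix $M$), so $N_i$ is decoded. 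Finally, decrement $\deg(w)$ by one for each $w \in N_i$, append $(x_i, N_i)$ to the record, and iterate.

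Correctness follows from the invariant, proved by induction on $i$, that after step $i$ each recorded $N_j$ equals $N_G(x_j) \setminus \{x_1,\dots,x_{j-1}\}$ and each $\deg_i(w)$ equals the degree of $w$ in $G - \{x_1,\dots,x_{i-1}\}$; the induction step is precisely the computation just described (the ``adjacent iff recorded'' equivalence makes $\mathbf y_i$ the correct sparse vector, the sketch property recovers it uniquely, and the degree update is then correct). The process removes one vertex per step, so it halts after $p \le n$ steps; by the selection rule each $x_i$ has degree at most $d$ in $G - \{x_1,\dots,x_{i-1}\}$, and by the halting condition the remaining graph has minimum degree strictly greater than $d$, so $(x_1,\dots,x_p)$ together with the sets $N_i$ (which encode all edges incident to the $x_i$) form a valid $d$-pruning. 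Since the local computation depends only on the broadcast transcript, all nodes output the same pruning.

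The main obstacle is conceptual rather than computational: a vertex that is eventually pruned may have degree far exceeding $d$ in $G$, so it cannot broadcast its whole neighborhood. The role of the linear sketch is that, at the instant $x_i$ is peeled off, only its $\le d$ \emph{surviving} incident edges are still unknown, and --- crucially --- the sketch of that residual $d$-sparse vector is reconstructible from the already-broadcast sketches by subtracting the contributions of the previously pruned neighbors, which are known because their removal-time neighborhoods were decoded at earlier steps. Making the two halves fit requires (i) a deterministic sketch matrix in which every $2d$ columns are independent and which also admits efficient decoding --- the Vandermonde / Reed--Solomon construction provides both within $\mathcal{O}(d \log n)$ bits --- and (ii) the bookkeeping showing that the ``surviving neighborhood'' vector is simultaneously $d$-sparse and expressible through quantities already recovered, which is the content of the invariant above.
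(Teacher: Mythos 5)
Your proof is correct, and its protocol half --- each node broadcasts its degree together with a linear sketch of its adjacency row, after which all nodes locally and identically peel off minimum-identity low-degree vertices, updating the surviving sketches by linearity using the decoded neighborhoods of previously pruned vertices --- is exactly the paper's protocol, including the key observation that the residual neighborhood of the vertex being peeled is $d$-sparse and Boolean at the moment it must be decoded. Where you genuinely diverge is in the construction of the deterministic sparse sketch (the paper's Lemma~\ref{lem:prun}). The paper uses a single polynomial evaluation $f(v)=\sum_i v_i\overline{x}^{\,i-1}$ over one huge field $\mathbb{F}_p$ with $p=2^{\Theta(d\log n)}$, where $\overline{x}$ is chosen by a union-bound counting argument as a common non-root of the at most $(1+n)^{2d}$ difference polynomials of pairs of $d$-sparse Boolean vectors; this is a derandomized fingerprint, it only guarantees injectivity on Boolean sparse inputs, and recovery is by inverting $f$ on that domain (local computation being free in the model). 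You instead use $2d$ Vandermonde syndromes over a field of size $\poly(n)$, i.e.\ a Reed--Solomon parity-check map: injectivity follows at once from the nonvanishing Vandermonde determinant on any $2d$ columns (no search for a good evaluation point, and in fact injectivity holds on all $d$-sparse vectors over $\Z_q$, not just Boolean ones), and you additionally get polynomial-time decoding via Berlekamp--Massey. Both sketches are linear and cost $\mathcal{O}(d\log n)$ bits, so they are interchangeable inside the protocol; yours buys explicit efficient decoding and a cleaner injectivity argument, while the paper's is a one-field-element construction that directly matches its stated motivation of derandomizing Schwartz--Zippel fingerprints.
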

Let $d>0$. We say that a vector is $d$-sparse if it has at most $d$ nonzero coordinates. We show that there exists a linear function compressing integer vectors, and such that if the compressed vector is Boolean and $d$-sparse, the vector can be recovered. We emphasize that the recovery property does not work over all $d$-sparse integer vectors, but only on  Boolean ones. In the following $\mathbb{F}_p$ denotes the field of integers {\it modulo} $p$, where $p>0$ is prime.

\begin{lemma}\label{lem:prun}
Let $n,d >0$. There exists a function $f: \mathbb{Z}^n \rightarrow \mathbb{F}_p$, for some prime number  $p = 2^{\mathcal{O}(d \cdot \log n)}$, such that: (1) $f$ is linear, and (2) $f$ is injective when restricted to $d$-sparse Boolean inputs.
\end{lemma}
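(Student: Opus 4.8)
\noindent\emph{Proof plan.}
I would build $f$ by packing the \emph{syndromes} of a Vandermonde-type parity-check matrix into a single field element, essentially a derandomization of the polynomial-fingerprint idea: instead of evaluating the characteristic polynomial of the support at a random point, we evaluate power sums at enough fixed points. Concretely, fix a prime $q$ with $n < q \le 2n$ (Bertrand's postulate), identify $1,\dots,n$ with distinct \emph{nonzero} elements of $\mathbb{F}_q$, and consider the $2d\times n$ matrix $V$ over $\mathbb{F}_q$ with $V_{m,i}=i^{m}\bmod q$ for $1\le m\le 2d$. Writing $\mathbf 1_S\in\{0,1\}^n$ for the indicator of $S\subseteq[n]$, the first step is to prove that $S\mapsto V\mathbf 1_S$ is injective on subsets of size at most $d$. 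Equivalently, $Vz\ne 0$ for every nonzero $z\in\{-1,0,1\}^n$ with at most $2d$ nonzero entries — and such a $z$ is exactly the difference of two $d$-sparse Boolean vectors. This holds because every set of $k\le\min(2d,n)$ columns of $V$ is linearly independent over $\mathbb{F}_q$: the $k\times k$ submatrix on columns $i_1<\dots<i_k$ factors as a standard Vandermonde matrix times $\mathrm{diag}(i_1,\dots,i_k)$, so its determinant is $\bigl(\prod_j i_j\bigr)\prod_{j<\ell}(i_\ell-i_j)$, nonzero in $\mathbb{F}_q$ precisely because the $i_j$ are distinct and nonzero (here is where $q>n$ is used).

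Next I would fold the $2d$ syndromes into one integer by a positional encoding in a base large enough to prevent carries over $d$-sparse inputs. Put $B:=dq$ and, for each $i\in[n]$, define the weight $w_i:=\sum_{m=1}^{2d}(i^m\bmod q)\,B^{m-1}$, an integer in $[0,B^{2d})$. Choose (Bertrand again) a prime $p$ with $B^{2d}\le p\le 2B^{2d}$ and set $f(x):=\sum_{i=1}^n x_i w_i \bmod p$. Linearity is immediate, and $\log_2 p=\mathcal{O}(d\log(dq))=\mathcal{O}(d\log n)$, i.e.\ $p=2^{\mathcal{O}(d\log n)}$, as long as $d\le n$; the case $d\ge n$ is trivial since then every Boolean vector is $d$-sparse and one can take $f(x)=\sum_i x_i2^{i-1}$ with a prime $p\in(2^n,2^{n+1}]$.

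For injectivity on a $d$-sparse Boolean $x$ with support $S$, observe that $\sum_i x_iw_i=\sum_{m=1}^{2d}B^{m-1}T_m$ with $T_m:=\sum_{i\in S}(i^m\bmod q)$, and $0\le T_m\le d(q-1)<B$. Hence $\sum_i x_iw_i<B^{2d}\le p$, so $f(x)$ equals this integer and its base-$B$ digits recover $T_1,\dots,T_{2d}$; reducing each modulo $q$ yields the power sums $\sum_{i\in S}i^m$ in $\mathbb{F}_q$, i.e.\ the vector $V\mathbf 1_S$, from which the first step recovers $S$ and hence $x$. The main thing to get right is that first step — that the syndrome map separates \emph{all} $d$-sparse supports. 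The delicate points there are that the relevant difference vectors have entries in $\{-1,0,1\}$, which forces $2d$ rows rather than $d$; that the evaluation points must be units of $\mathbb{F}_q$, forcing $q>n$; and that the packing base $B$ must exceed $\max_m T_m$ so the lifted syndromes of a $d$-sparse vector add without carrying. All three are handled by the choices above. \qed
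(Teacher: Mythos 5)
Your construction is correct, but it takes a genuinely different route from the paper. The paper keeps the fingerprint polynomial $P(v,X)=\sum_i v_iX^{i-1}$ and derandomizes by brute force over the evaluation point: it works in a single large field $\mathbb{F}_p$ with $p>(1+n)^{2d}\cdot n$, notes that the at most $(1+n)^{2d}$ difference polynomials $P(b-b',X)$ have at most $n$ roots each, and picks one point $\overline{x}$ avoiding all of them, setting $f(v)=P(v,\overline{x})\bmod p$; injectivity on $d$-sparse Boolean vectors is then immediate from the definition of $\overline{x}$. You instead evaluate at \emph{many} points in a \emph{small} field: your $2d\times n$ matrix over $\mathbb{F}_q$, $q=\Theta(n)$, is a Reed--Solomon/BCH-type parity-check matrix, and injectivity on differences of $d$-sparse Boolean vectors follows from the Vandermonde determinant (any $2d$ columns independent) rather than from a root-counting pigeonhole; you then fold the $2d$ syndromes into one element of $\mathbb{F}_p$ by carry-free positional encoding, which is the only place the $d$-sparsity and Booleanness of the input are used to keep the packed integer below $p$. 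Both yield $p=2^{\mathcal{O}(d\log n)}$. What each buys: the paper's argument is shorter and needs no coding-theoretic facts, but its $\overline{x}$ is only defined as ``the minimum non-root'' and the natural way to find it (and to invert $f$ in the pruning protocol) is exhaustive search over $n^{\mathcal{O}(d)}$ objects; your $f$ is fully explicit and inverts by standard syndrome decoding in time polynomial in $n$ and $d$, which is a real advantage for the local computation in Theorem~\ref{theo:prun} even though the model does not charge for it. Your handling of the delicate points (entries in $\{-1,0,1\}$ forcing $2d$ rows, $q>n$ so the evaluation points are distinct units, the base $B=dq$ exceeding every $T_m$, and the trivial case $d\ge n$) is all sound.
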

\noindent {\it Proof sketch.}
Let $\mathcal{B} = \{ b \in \{0,1\}^n : \sum_{i=1}^n b_i \leq d\}$ be the family of $d$-sparse  Boolean vectors of dimension $n$, and $\mathcal{T} = \{t \in \{-1, 0, 1\}^n : \exists \text{~distinct~} b, b' \in \mathcal{B},  t= b - b'\}$. Call $p = p(n,d)$ the smallest prime number greater than $(1+n)^{2d}\cdot n$. Let $P(\mathcal{T})$ be the family of polynomials over the field $\mathbb{F}_p$ associating to each $t \in \mathcal{T}$ the polynomial $P(t,X) = \sum_{i=1}^n t_i X^{i-1}$ (values are taken modulo $p$). Let $\overline{x} = \overline{x}(n,d)$ be the minimum integer in $\mathbb{F}_p$ which is not a root of any polynomial  in $P(\mathcal{T})$; $\overline{x}$ exists because there are at most  $|\mathcal{T}|$ polynomials in $P(\mathcal{T})$,  each polynomial has at most $n$ roots in $\mathbb{F}_p$, and $p > |\mathcal{T}| \cdot n$. We define then for each $v \in \mathbb{Z}^n$ the function $f(v) = P(v,\overline{x})$. Clearly $f$ is linear, and by definition of $\overline{x}$, for any distinct $b, b' \in  \mathcal{B}$, we have $f(b) = P(b, \overline{x})\neq P(b', \overline{x}) = f(b')$. 
\qed
\medskip

\noindent {\it Proof sketch of Theorem \ref{theo:prun}.}  In the $d$-pruning protocol, each player $i$ sends the message $M_i = (M_i^1, M_i^2) = (d_i, f(a_i))$, where $d_i$ is its degree, $a_i$ is the row of the adjacency matrix corresponding to node $i$, and $f$ is the function of Lemma \ref{lem:prun}. The number of communicated bits is 
$\mathcal{O}(d \log n)$. Call  $M(G) = (M_1, \dots, M_n)$ the {\it messages vector} of $G$. The nodes use $M(G)$ to prune the graph as follows. First they look for a node $k$ such that $M^1_k\leq d$. Using the  injectivity property of $f$ they can obtain the neighborhood of $k$ in $G$. Then, using the linearity of $f$ they compute the messages vector of $G - \{k\}$ updating $M_j = ( M_j^1 - 1, M_j^2 - f(e_k))$ for each $j$ neighbor of $k$, where $e_k$ is the Boolean vector having a unique one in the $k$-th coordinate. The process is reiterated (with no extra communication) until there is no more node of degree at most $d$.
\qed



We use our $d$-pruning protocol to produce a one-round connectivity protocol, this time in the $\BCLIQUE_r[n^{1/r}\cdot \log n]$ model. Our protocol is based in the following propositions respectively found in \cite{Awerbuch}, and \cite{drucker12}. Let $G = (V,E)$ be a graph, $r>0$, and $\sigma$ some total ordering of $E$. For each cycle of length at most $2r$ pick the maximum edge of the cycle according to $\sigma$. Call $\tilde{E}$ the set of picked edges, and call $\tilde{G} = (V, E-\tilde{E})$. Note that $\tilde{G}$ has no cycles of length at most $2r$.

\begin{proposition}[\cite{Awerbuch}]
$G$ is connected if and only if $\tilde{G}$ is connected, and any spanning forest of $\tilde{G}$ is a spanning forest of $G$.
\end{proposition}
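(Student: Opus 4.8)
The plan is to isolate one structural fact and derive everything from it: \emph{$\tilde{G}$ has exactly the same connected components as $G$}. Granting this, the two claims of the proposition are immediate bookkeeping. For the ``if and only if'', the partition of $V$ into connected components is the same for $G$ and for $\tilde{G}$, so one partition is trivial precisely when the other is; hence $G$ is connected iff $\tilde{G}$ is connected. For the spanning forest claim, recall that in this paper a spanning forest means a maximal acyclic spanning subgraph. If $F$ is a spanning forest of $\tilde{G}$, then $F$ is acyclic, spans $V$, and its components coincide with those of $\tilde{G}$, hence with those of $G$; since $\tilde{E}\subseteq E$ we have $F\subseteq \tilde{G}\subseteq G$, and because any edge of $G$ joins two vertices already in the same component of $F$, adding it to $F$ would create a cycle. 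Thus $F$ is a maximal acyclic spanning subgraph of $G$, i.e.\ a spanning forest of $G$.

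So the only real work is to prove that deleting the edge set $\tilde{E}$ neither merges nor splits a connected component. The delicate point is that $\tilde{E}$ is removed all at once, whereas the natural ``remove one edge of a short cycle, the rest of the cycle keeps its endpoints connected'' argument is inherently sequential. I would resolve this by deleting the edges of $\tilde{E}$ one at a time, in \emph{decreasing} order with respect to $\sigma$. Suppose we are about to delete $e=\{u,v\}\in\tilde{E}$, and let $C$ be a cycle of length at most $2r$ for which $e$ was picked, so $e$ is the $\sigma$-maximum edge of $C$. Every other edge of $C$ is strictly $\sigma$-smaller than $e$, hence has not yet been deleted (we are processing in decreasing order), so $C-e$ is still a path from $u$ to $v$ in the current graph. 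Therefore deleting $e$ does not disconnect $u$ from $v$, and so it changes neither the set of connected components nor their vertex partition. Iterating over all edges of $\tilde{E}$ in this order shows that $G$ and $\tilde{G}$ have the same connected components.

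Two routine verifications fit alongside the argument: since $\sigma$ is a \emph{total} order, each short cycle has a unique $\sigma$-maximum edge, so ``the picked edge'' and ``strictly smaller'' are well defined; and although one edge may be picked by several cycles of length at most $2r$, we only ever need a single witnessing cycle for it. I also would note explicitly that $\tilde{G}$ has vertex set $V$, so ``spanning'' is preserved literally, no vertex being removed. The main (indeed the only) obstacle is the simultaneous-deletion issue, and the decreasing-$\sigma$ ordering is exactly the device that reduces it to the one-edge-at-a-time case; everything after that is immediate.
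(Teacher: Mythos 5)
Your proof is correct. The paper itself gives no proof of this proposition---it is stated with a citation to Awerbuch et al.~and used as a black box---so there is nothing internal to compare against; your argument (delete the picked edges one at a time in decreasing $\sigma$-order, so that the witnessing short cycle minus its maximum edge is still intact and keeps the endpoints connected) is the standard cycle-rule argument and correctly handles the one genuinely delicate point, namely the simultaneous deletion of all of $\tilde{E}$.
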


The following proposition shows that $\tilde{G}$ has {\it sublinear} degeneracy. Recall that the girth of a graph is the length of a shortest cycle.

\begin{proposition}[\cite{drucker12}]\label{prop:ex_deg}
Let $G$ be a graph of girth at least $2r$. Then $G$ is $\mathcal{O}(n^{1/r})$-degenerate.
\end{proposition}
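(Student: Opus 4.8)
\medskip
\noindent\emph{Proof proposal.}
The plan is to deduce the statement from the classical Moore bound. Recall that $G$ is $k$-degenerate precisely when every subgraph of $G$ has a vertex of degree at most $k$ (equivalently, one may iteratively delete a vertex of degree at most $k$ until nothing is left). Passing to a subgraph can only increase the girth and cannot increase the number of vertices, so it suffices to prove the following uniform statement: any graph $H$ with girth at least $2r$ and $N\le n$ vertices has a vertex of degree $\mathcal{O}(n^{1/r})$. One caveat: in the intended application the graph $\tilde{G}$ in fact has \emph{no} cycle of length $\le 2r$, i.e.\ girth strictly larger than $2r$, and it is this extra slack that makes the exponent exactly $1/r$; I point out below where the parity of the girth matters.

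For the Moore-type estimate, let $H$ have minimum degree $\delta$ and girth $g$, fix a vertex $v$, and look at the BFS layers $L_0=\{v\},L_1,L_2,\dots$ around $v$. Two exclusions follow immediately from the girth: if $2j+1<g$ then $L_j$ is independent (an edge inside $L_j$ would close a cycle of length $\le 2j+1$), and if $2j<g$ then every vertex of $L_j$ has at most one neighbour in $L_{j-1}$ (two such neighbours would close a cycle of length $\le 2j$). Since BFS edges only join consecutive layers, these exclusions imply that for every index $j$ with $2j+2<g$ each vertex of $L_j$ sends at least $\delta-1$ edges forward into $L_{j+1}$, while each vertex of $L_{j+1}$ receives exactly one, so $|L_{j+1}|\ge(\delta-1)\,|L_j|$; together with $|L_1|\ge\delta$ this iterates to $N\ge\delta\,(\delta-1)^{\lceil g/2\rceil-2}$. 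For $g\ge 2r+1$ the right-hand side is $\Omega(\delta^{r})$, hence $\delta=\mathcal{O}(N^{1/r})=\mathcal{O}(n^{1/r})$; for $g$ exactly $2r$ the chain is one step shorter and yields only $\mathcal{O}(n^{1/(r-1)})$, which is exactly why $\tilde{G}$ is constructed so as to destroy all cycles of length up to $2r$ and not merely to raise the girth to $2r$.

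Putting the two parts together: every subgraph of $G$ has a vertex of degree $\mathcal{O}(n^{1/r})$, which is the assertion that $G$ is $\mathcal{O}(n^{1/r})$-degenerate. The one genuinely load-bearing step is the layer-counting inequality above, i.e.\ the Moore bound, and in particular the off-by-one in the exponent governed by whether the forbidden length $2r$ equals the girth or lies one below it; everything else is bookkeeping (handling $\delta\le 1$ trivially, checking that the relevant layers are nonempty, and so on). If one prefers a more quotable packaging, the same computation reads ``a graph with no cycle of length $\le 2r$ has $\mathcal{O}(n^{1+1/r})$ edges''; applied to every subgraph of $G$ this gives a subgraph on $N$ vertices at most $\mathcal{O}(N^{1+1/r})=\mathcal{O}(N\cdot n^{1/r})$ edges, hence average and therefore minimum degree $\mathcal{O}(n^{1/r})$.
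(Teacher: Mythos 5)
Your argument is correct, and the comparison here is somewhat one-sided: the paper does not prove this proposition at all, it simply cites it from Drucker, Kuhn and Oshman, so your Moore-bound derivation supplies the standard self-contained argument one would expect behind the citation (degeneracy reduces to a minimum-degree bound on subgraphs, subgraphs inherit the girth lower bound, and BFS layer counting gives $N\ge\delta(\delta-1)^{\lceil g/2\rceil-2}$). Your parity observation is the most valuable part and is genuinely right: with girth \emph{exactly} $2r$ the layer chain is one step shorter and yields only $\mathcal{O}(n^{1/(r-1)})$, and this is not an artifact of the proof --- incidence graphs of projective planes have girth $6$ and are $\Theta(n^{1/2})$-regular, hence not $\mathcal{O}(n^{1/3})$-degenerate, so the proposition as literally stated (girth at least $2r$) is false for $r=3$. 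The correct hypothesis is girth strictly greater than $2r$, which is exactly what the paper's construction of $\tilde{G}$ guarantees (all cycles of length at most $2r$ are destroyed), so the application in the connectivity protocol is unaffected; but you are right that the statement should read ``girth at least $2r+1$.'' The only bookkeeping you wave at --- handling $\delta\le 1$ and the nonemptiness of the layers --- is indeed trivial, and your closing reformulation via the edge count $\mathcal{O}(N^{1+1/r})$ is an equivalent packaging of the same computation.
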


\begin{theorem}
There is a one-round deterministic protocol that computes the connected components of the input graph in the $\BCLIQUE_r[n^{1/r}\cdot \log n]$ model. The protocol returns a spanning forest of the input graph.
\end{theorem}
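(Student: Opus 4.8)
\noindent\emph{Proof plan.}
The plan is to let every node reconstruct locally its own row of the adjacency matrix of the sparsified graph $\tilde{G}$, then apply the one-round $d$-pruning protocol of Theorem~\ref{theo:prun} to $\tilde{G}$ so that all nodes learn $\tilde{G}$ entirely, and finally have each node extract the same spanning forest of $\tilde{G}$, which is a spanning forest of $G$ by the proposition of \cite{Awerbuch}. Throughout, the nodes agree on a common total ordering $\sigma$ of the edges --- for instance the lexicographic order on the sorted pairs of endpoint identities --- which any node can evaluate on any two edges it sees.

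The first step I would carry out is the key locality claim: from the subgraph induced by the ball of radius $r$ around it, a node $v$ can decide which of its incident edges belong to $\tilde{E}$, and does so consistently with the other endpoint. The reason is that an edge $e=uv$ is removed iff it is the $\sigma$-maximum edge of some cycle $C$ of length at most $2r$ through $e$; every vertex of such a $C$ is joined to $v$ (and to $u$) by an arc of $C$ of length at most $|C|/2\le r$, hence lies at graph-distance at most $r$ from $v$, so $v$ sees all of $C$ and can test the condition. Since the condition is a property of the unordered edge $e$ and both endpoints inspect the same cycles, $u$ and $v$ reach the same verdict; thus the locally computed rows are mutually consistent and describe exactly $\tilde{G}=(V,E\setminus\tilde{E})$, so each node knows its adjacency row $\tilde{a}_v$ and degree $\tilde{d}_v$ in $\tilde{G}$.

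Next I would invoke Proposition~\ref{prop:ex_deg}. By construction $\tilde{G}$ has no cycle of length at most $2r$, hence girth larger than $2r$, so it is $d$-degenerate for some $d=\mathcal{O}(n^{1/r})$ depending only on $n$ and $r$ and therefore computable by every node; $d$-degeneracy is precisely the statement that the $d$-pruning of $\tilde{G}$ exhausts the whole vertex set. In the single communication round each node $v$ then broadcasts $M_v=(\tilde{d}_v, f(\tilde{a}_v))$ with $f$ the function of Lemma~\ref{lem:prun}, spending $\mathcal{O}(d\log n)=\mathcal{O}(n^{1/r}\log n)$ bits, which fits the $\BCLIQUE_r[n^{1/r}\log n]$ budget. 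Running the pruning procedure of Theorem~\ref{theo:prun} on the resulting message vector successively reveals the neighbourhood of each pruned vertex in the current graph; since every vertex is eventually pruned, every edge of $\tilde{G}$ is revealed, so each node reconstructs $\tilde{G}$ in full, computes the same fixed spanning forest of $\tilde{G}$, and outputs it --- by \cite{Awerbuch} this is a spanning forest of $G$ and its trees are exactly the connected components of $G$.

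The hard part is the locality argument of the second paragraph: one must be careful that the radius-$r$ view of a node genuinely suffices to recover the surviving incident edges, and that both endpoints of an edge agree, which hinges on every cycle witnessing a removal being fully contained in each endpoint's ball. Once this is nailed down, the theorem is just the composition of this observation with the already established $d$-pruning protocol (Theorem~\ref{theo:prun}) and degeneracy bound (Proposition~\ref{prop:ex_deg}).
\qed
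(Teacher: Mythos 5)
Your proposal is correct and follows essentially the same route as the paper: locally sparsify to $\tilde{G}$ by deleting the $\sigma$-maximum edge of every cycle of length at most $2r$ (which each node can do from its radius-$r$ view), apply Proposition~\ref{prop:ex_deg} to bound the degeneracy by $\mathcal{O}(n^{1/r})$, and run the one-round pruning protocol of Theorem~\ref{theo:prun} to reconstruct $\tilde{G}$ and output a common spanning forest. Your write-up is in fact more careful than the paper's on the locality and endpoint-consistency argument, which the paper dispatches in one sentence.
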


\begin{proof} 

Choose an ordering of the edges of $G$, for example if we denote the edges $e= (u,v)$ with $u<v$, then $(u_1, v_1) < (u_2, v_2)$ if either $u_1 < u_2$ or $u_1 = u_2$ and $v_1 < v_2$. 

In the protocol, a node $v$ looks for all cycles of length at most $2r$ in $G$ that contain it. Notice that nodes do this without any communication since they see all neighbors at distance at most $r$. For each such cycle, $v$ picks the maximum edge according to the edge ordering, obtaining the row of the adjacency matrix of $\tilde{G}$ corresponding to $v$. Then each node can simulate the $s$-pruning protocol when the input graph is $\tilde{G}$, where $s = \mathcal{O}(n^{1/r})$ is the degeneracy of graphs of girth at least $2r$, obtained from Proposition \ref{prop:ex_deg}. Each node then reconstructs $\tilde{G}$ and computes a spanning forest of $\tilde{G}$.
\end{proof}

\section{Discussion}




We have shown that, in the $\BCLIQUE[n^{\epsilon} \cdot \log n]$ model, $\mathcal{O}(1/\epsilon)$ rounds are enough to decide connectivity deterministically. If nodes see the graph induced by the nodes at distance at most $r$, then connectivity can be decided by a deterministic one-round protocol, with messages of size $\mathcal{O}(n^{1/r}\cdot\log n)$.

These results rely on a deterministic protocol for $d$-pruning, which in particular reconstructs $d$-degenerate graphs. We believe that this protocol, and the sketch function it is based on, might be of interest for further applications.

Like Ahn {\it et al.}'s results~\cite{AGM12}, our connectivity protocol can be transformed into protocols to detect if the input graph is bipartite. Moreover, the $d$-pruning protocol can be implemented as a $\mathcal{O}(nd\log n)$-space protocol in the {\it  dynamic graph streaming} model.

To wrap up, let us recall that the existence of a one-round deterministic protocol for connectivity, in the broadcast congested clique with sublinear message size, is still open. We might as well ask whether the problem could be solved by randomized protocols using only \emph{private coins}.

\section{Acknowledgments}
We thank F. Becker and I. Rapaport for fruitful discussions on the subject, and the anonymous referees for useful suggestions. P. Montealegre would like to thank the support of Conicyt Becas Chile-72130083.


\bibliographystyle{abbrv}

\end{document}